\newcommand*{\cl}[1]{{\mathcal{#1}}}
\newcommand*{\bb}[1]{{\mathbb{#1}}}
\newcommand{\ket}[1]{\left|#1\right>}
\newcommand{\bra}[1]{\left<#1\right|}
\newcommand{\proj}[2]{| #1 \rangle\!\langle #2 |}
\newcommand*{\tn}[1]{{\textnormal{#1}}}
\newcommand{\T}{\mbox{$\textnormal{tr}$}}
\newtheorem{Prop}{Proposition}
\begin{document}

\title{Minimal control power of the controlled teleportation}

\author{Kabgyun Jeong}
\affiliation{School of Computational Sciences, Korea Institute for Advanced Study, Seoul 02455, Korea}
\affiliation{Center for Macroscopic Quantum Control, Department of Physics and Astronomy, Seoul National University, Seoul 08826, Korea}
\author{Jaewan Kim}
\affiliation{School of Computational Sciences, Korea Institute for Advanced Study, Seoul 02455, Korea}
\author{Soojoon Lee}
\affiliation{Department of Mathematics and Research Institute for Basic Sciences, Kyung Hee University, Seoul 02447, Korea}

\date{\today}
\pacs{
03.67.Mn, 
03.65.Ud, 
03.67.Hk, 
03.67.Ac  
}

\begin{abstract}
We generalize the control power of a perfect controlled teleportation
of an entangled three-qubit pure state,
suggested by Li and Ghose [Phys. Rev. A {\bf 90}, 052305 (2014)],
to the control power of a general controlled teleportation
of a multiqubit pure state.
Thus, we define the minimal control power,
and calculate the values of the minimal control power
for a class of general three-qubit Greenberger-Horne-Zeilinger (GHZ) states
and the three-qubit $W$ class whose states have zero three-tangles.
Moreover, we show that the standard three-qubit GHZ state 
and the standard three-qubit $W$ state have
the maximal values of the minimal control power for the two classes, respectively.
This means that the minimal control power can be interpreted as
not only an operational quantity of a three-qubit quantum communication
but also a degree of three-qubit entanglement.
In addition, we calculate the values of the minimal control power for 
general $n$-qubit GHZ states and the $n$-qubit $W$-type states.
\end{abstract}

\maketitle

\section{Introduction} \label{intro}
Quantum teleportation~\cite{BBCJPW93} has been considered
as one of the most important applications of quantum entanglement,
and hence has been studied in various ways,
which are experimental as well as theoretical~\cite{BPMEWZ97,FSBFKP98,BBMHP98,KKS01}.

In the standard quantum teleportation scheme,
the sender, Alice, and the receiver, Bob,
share a maximally entangled Bell state in advance,
and then Alice performs a two-qubit Bell measurement
on her qubit of the Bell state and a qubit state to be teleported.
Based on the two-bit classical information transmitted
according to the measurement outcome from Alice,
Bob can apply an appropriate unitary operation on his qubit of the Bell state
to perfectly recover the state.

We can also consider a teleportation scheme over a three-qubit pure state,
called {\em controlled teleportation} (CT)~\cite{KB98,LJK05,LJK07},
which is a variant of the splitting and reconstruction of quantum information
over the Greenberger-Horne-Zeilinger (GHZ) state,
introduced by Hillery {\it et al.}~\cite{HBB99}.
In this scheme,
a controller on one qubit of the state
maximally assists the teleportation procedure on the other two qubits.

Recently, Li and Ghose~\cite{LG14} investigated a {\em control power} of the controller
in perfect CT via two classes of the partially entangled three-qubit pure states,
and generalized it to multiqubit CT schemes~\cite{LG15}.
In this paper, we define a {\em minimal} control power of the CT,
which is a more general concept compared to the perfect case in Refs.~\cite{LG14,LG15},
and we present explicit calculations for the minimal control power for
a class of general three-qubit GHZ states
and the three-qubit $W$ class whose states have zero three-tangles.
Moreover, we show that the standard GHZ state and the standard $W$ state have
the maximal values of the minimal control power for the two classes, respectively.
This implies that our newly defined quantity,
the minimal control power,
can be naturally considered as a good candidate for a degree of three-qubit entanglement.
By using additional examples of general $n$-qubit GHZ states and the $n$-qubit $W$-type states,
we strengthen our claim on the degree of tripartite entanglement.

In Sec.~\ref{MCP},
we review the basic notations and known results,
and give a definition of the minimal control power.
In Sec.~\ref{Cal_MCP},
we present how to calculate the value of the minimal control power
for a three-qubit pure state.
In Secs.~\ref{CT/GHZ} and \ref{CT/W},
we explicitly calculate the minimal control power for the two classes of three-qubit pure states,
and show that the minimal control powers of the GHZ state and the $W$ state
are maximal in the two classes, respectively. Furthermore, in Sec.~\ref{NQCT},
we exhibit examples of general $n$-qubit GHZ states and the $n$-qubit $W$-type states.
Finally, we discuss our results in Sec.~\ref{conclusion}.

\section{Minimal Control Power in a CT}\label{MCP}
We first review the properties of
the maximal teleportation fidelity~\cite{P94}
and the fully entangled fraction~\cite{HHH99,BHHH00}.
The teleportation fidelity is naturally defined as
\begin{equation}
F(\rho)=\int d\xi \bra{\xi}\Lambda_{\rho}(\ket{\xi}\bra{\xi})\ket{\xi},
\label{eq:teleportation_fidelity}
\end{equation}
where $\Lambda_\rho$ is the standard teleportation protocol over a two-qubit state $\rho$
to attain the maximal fidelity,
the integral is performed
with respect to the uniform distribution $d\xi$ over all one-qubit pure states,
and the fully entangled fraction of $\rho$ is
defined as
\begin{equation}
f(\rho)=\max\bra{e}\rho\ket{e},
\label{eq:FEF}
\end{equation}
where the maximum is over all two-qubit maximally entangled states $\ket{e}$.
It has been shown~\cite{HHH99,BHHH00} that
\begin{equation}
F(\rho)=\frac{2f(\rho)+1}{3}.
\label{eq:2relation}
\end{equation}
We remark that $F(\rho)>2/3$ [or $f(\rho)>1/2$] if and only if
$\rho$ is said to be useful for teleportation,
since it has been shown that the classical teleportation
can have at most $F=2/3$ (or $f=1/2$)~\cite{BHHH00}.

Let $\rho_{12\cdots n}$ be an $n$-qubit state.
For an $(n-2)$-element subset $J\subseteq\{1,2,\ldots,n\}$ with $J=\{j_1,j_2,\ldots,j_{n-2}\}$,
let $F^{J}_\mathrm{CT}$ be
the maximal teleportation fidelity of the resulting two-qubit state in the subsystem~$kl$
after the measurement of the subsystem~$J$; that is,
\begin{equation}
F^{J}_\mathrm{CT}
=\max_{U_J}\sum_{t=0}^{2^{n-2}-1}\bra{t}U_J\rho_JU_J^\dagger\ket{t}F(\varrho_{kl}^t),
\label{eq:fij}
\end{equation}
where $U_J=U_{j_1}\otimes U_{j_2}\otimes\cdots\otimes U_{j_{n-2}}$,
$\rho_J=\rho_{j_1}\otimes \rho_{j_2}\otimes\cdots\otimes \rho_{j_{n-2}}$,
$\{k,l\}=\{1,2,\ldots,n\} \setminus J$,
the maximum is taken over all $2\times 2$ unitary operators $U_{j_1},U_{j_2},\ldots, U_{j_{n-2}}$,
and $\varrho_{kl}^{t}$ is the resulting state of the subsystem~$kl$
after the local measurement on the subsystem~$J$
when the measurement outcome is $t$.
Similarly, for all $J\subseteq\{1,2,\ldots,n\}$ with $|J|=n-2$,
let $f^{J}_\mathrm{CT}$ be defined as
the maximal average of the fully entangled fraction of the state in the subsystem~$kl$
after the measurement on the subsystem~$J$;
then it can be obtained that for all $J$
\begin{equation}
F^{J}_\mathrm{CT}=\frac{2f^J_\mathrm{CT}+1}{3},
\end{equation}
as in the two-qubit case~\cite{LJK05}.

We now note that $F_\mathrm{CT}^J\ge F(\rho_{kl})$ and $f_\mathrm{CT}^J\ge f(\rho_{kl})$.
Thus, we can define a control power as the difference between
the (controlled) maximal teleportation fidelity
and the teleportation fidelity without control
in a CT of an $n$-qubit state as follows.
For all $J\subseteq\{1,2,\ldots,n\}$,
let $P^J$ be the control power of an $n$-qubit state $\rho_{12\cdots n}$,
defined as
\begin{equation}
P^J(\rho_{12\cdots n}):=F_\mathrm{CT}^J-F(\rho_{kl}),
\end{equation}
where $P^J(\rho_{12\cdots n})\ge0$ due to the above note,
and let us define the minimal control power $P$ of an $n$-qubit state $\rho_{12\cdots n}$
by
\begin{equation} 
P(\rho_{12\cdots n}):=\min_J\{P^J(\rho_{12\cdots n})\}.
\end{equation}
Then, by using the usefulness of teleportation on a two-qubit state,
we say that
the minimal control power $P$ is {\em meaningful}
on an $n$-qubit state $\rho_{12\cdots n}$
if and only if $F_\mathrm{CT}^J>2/3$ but $F(\rho_{kl})\le 2/3$
for all $J$, $k$, and $l$ satisfying $J\cup\{k,l\}=\{1,2,\ldots,n\}$.

In this paper, we deal with only pure states for explicit calculation,
since it is not easy to obtain a general formula of the minimal control power for mixed states
even though it can be calculated for a given mixed state.

\section{Minimal Control Power in a Three-qubit CT} \label{MCP3}
Let $\psi_{123}=\proj{\psi}{\psi}_{123}$ be a three-qubit pure state.
For $j$ in $\{1,2,3\}$,
let $F^{j}_\mathrm{CT}$ be
the maximal teleportation fidelity of the resulting two-qubit state in the subsystem~$kl$
after the measurement on the system~$j$; that is,
\begin{equation}
F^{j}_\mathrm{CT}
=\max_{U}\left[\bra{0}U\rho_{j}U^{\dagger}\ket{0}F(\varrho_{kl}^{0})
+\bra{1}U\rho_{j}U^{\dagger}\ket{1}F(\varrho_{kl}^{1})\right],
\label{eq:fi}
\end{equation}
where the maximum is taken over all $2\times 2$ unitary matrices,
for $j\in\{1,2,3\}$, and
let $f^{j}_\mathrm{CT}$ be defined as
the maximal average of the fully entangled fraction of the state in the subsystem~$kl$
after the measurement of the subsystem~$j$.
For each $j\in\{1,2,3\}$,
let $P^j$ be the control power of a three-qubit pure state $\psi_{123}$, defined as
\begin{equation}
P^j(\psi_{123}):=F_\mathrm{CT}^j-F(\rho_{kl}),
\end{equation}
and let us define the minimal control power $P$ of a three-qubit pure state $\psi_{123}$
by
\begin{equation} \label{minimalCP}
P(\psi_{123}):=\min\{P^1(\psi_{123}), P^2(\psi_{123}), P^3(\psi_{123})\}.
\end{equation}

\subsection{How to calculate the minimal control power}\label{Cal_MCP}
For a three-qubit pure state $\psi_{123}$,
the three-tangle $\tau$~\cite{CKW00,DVC} is defined as
\begin{equation}
\tau=\mathcal{C}^2_{j(kl)}-\mathcal{C}^2_{jk}-\mathcal{C}^2_{jl},
\label{eq:tangle}
\end{equation}
where $\mathcal{C}_{jk}=\mathcal{C}(\rho_{jk})=\mathcal{C}(\mathrm{tr}_l(\psi_{123}))$,
$\mathcal{C}_{j(kl)}=\mathcal{C}(\psi_{j(kl)})$,
and $\mathcal{C}$ is the Wootters' concurrence~\cite{HW,Wootters},
and the partial tangle $\tau_{jk}$~\cite{LJK05} is defined as
\begin{equation}
\tau_{jk}=\sqrt{\cl{C}_{j(kl)}^2-\cl{C}_{jl}^2}=\sqrt{\tau+\cl{C}_{jk}^2},
\label{eq:PT}
\end{equation}
for $\{j,k,l\}=\{1,2,3\}$.
Then it was shown~\cite{LJK05} that
there is an interesting relation between the maximal teleportation fidelity $F_\mathrm{CT}^j$
and the partial tangle $\tau_{kl}$, that is,
\begin{equation}
F_\mathrm{CT}^j=\frac{2+\tau_{kl}}{3}.
\label{eq:MTF_PT}
\end{equation}
Since the Wootters' concurrence for a two-qubit state is computable,
each maximal teleportation fidelity $F_\mathrm{CT}^j$ for a three-qubit pure state
is also computable.

For distinct $j$, $k$, and $l\in\{1,2,3\}$,
let $T^j$ be a $3\times 3$ real matrix
whose $(m,n)$ entry is $\T\left(\rho_{kl}(\sigma_m\otimes\sigma_n)\right)$,
where $\sigma_i$'s are the Pauli matrices.
Then it was shown~\cite{HHH96} that
\begin{equation}
F(\rho_{kl})=\frac{3+\|T^j\|_1}{6},
\label{eq:TF_T}
\end{equation}
where $\|\cdot\|_1$ is the trace norm; that is,
$\|M\|_1=\T\sqrt{M^\dagger M}$ for any matrix $M$.
Thus, given a three-qubit pure state,
the teleportation fidelity without control $F(\rho_{kl})$ can be calculated,
and this directly implies that
one can calculate the minimal control power
for a given three-qubit pure state.

From Eqs.~(\ref{eq:MTF_PT}) and (\ref{eq:TF_T}),
we can know that
the minimal control power is meaningful
if and only if each partial tangle is strictly positive,
and each $\|T^j\|_1$ is not greater than 1; that is,
$\tau_{kl}>0$ but $\|T^j\|_1\le1$ for all $j$, $k$, and $l$.

\subsection{Example: General GHZ states}\label{CT/GHZ}
Let $\ket{\psi_{\tn{GHZ}}}$ be a state defined by
\begin{equation}
\ket{\psi_{\tn{GHZ}}}=a\ket{000}+b\ket{111},
\end{equation}
where $a,b\in\bb{C}$ such that $|a|^2+|b|^2=1$; then
the state $\ket{\psi_{\tn{GHZ}}}$ is here called a general GHZ state,
and let $\psi_{\tn{GHZ}}=\proj{\psi_{\tn{GHZ}}}{\psi_{\tn{GHZ}}}_{123}$.
Then it is clear that its reduced density matrices have the same form as
\begin{equation}
\rho_{kl}
=\begin{pmatrix}
|a|^2 & 0 & 0 & 0 \\
0 & 0 & 0 & 0 \\
0 & 0 & 0 & 0\\
0 & 0 & 0 & |b|^2
\end{pmatrix},
\end{equation}
and the matrices $T^j$ also have the same form,
\begin{align}
T^j&
=\begin{pmatrix}
0 & 0 & 0 \\
0 & 0 & 0 \\
0 & 0 & 1
\end{pmatrix}.
\end{align}
Since $\|T^j\|_1=1$,
it can be directly obtained that the teleportation fidelity $F(\rho_{kl})=2/3$
for all distinct $k,l\in\{1,2,3\}$.
In addition, we can readily know that
the three-tangle $\tau(\psi_{\tn{GHZ}})=4|a|^2|b|^2$ and the concurrence $\cl{C}_{kl}=0$
since $\rho_{kl}$ is separable.
Hence, the maximal teleportation fidelity of a general GHZ state becomes
\begin{equation}
F_\mathrm{CT}^j=\frac{\sqrt{\tau+\cl{C}_{kl}^2}+2}{3}=\frac{2|a||b|+2}{3}.
\label{eq:MTP_GHZ}
\end{equation}

Since each control power for the state
\begin{equation}
P^j(\psi_{\tn{GHZ}})=F_\mathrm{CT}^j(\psi_{\tn{GHZ}})-F(\rho_{kl})=\frac{2|a||b|}{3},
\end{equation}
the minimal control power is
\begin{equation}
P(\psi_{\tn{GHZ}})=\frac{2|a||b|}{3}.
\label{eq:MCP_GHZ}
\end{equation}

We note that
the minimal control power $P$ is meaningful for the state $\psi_{\tn{GHZ}}$
if both $a$ and $b$ are non-zero,
as seen in Eq.~(\ref{eq:MTP_GHZ}).
Since the inequality $|a||b|\le 1/2$ holds
for all complex numbers $a$ and $b$ satisfying $|a|^2+|b|^2=1$,
and the equality in the relation holds if and only if $|a|=|b|=1/\sqrt{2}$,
it is clearly shown that the standard GHZ state attains the maximal value of the minimal control power among the general GHZ states.

\subsection{Example: $W$-class states}\label{CT/W}
Let $\psi_{{W}}=\ket{\psi_{{W}}}\bra{\psi_{{W}}}$ be a $W$-class state,
whose three-tangle vanishes, that is, $\tau(\psi_{{W}})=0$.
Then it is known~\cite{AACJLT00,ABLS01} that the state $\psi_{{W}}$ can be written as
\begin{equation}
\ket{\psi_{{W}}}=\lambda_0\ket{100}+\lambda_1\ket{000}+\lambda_2\ket{110}+\lambda_3\ket{101}
\end{equation}
up to local unitary,
where the coefficients $\lambda_i\ge0$ and $\sum_i\lambda_i^2=1$.

By straightforward calculations,
we can find the matrices $T^j$ as follows:
\begin{eqnarray}
T^1&=&\begin{pmatrix}
2\lambda_2\lambda_3 & 0 & 2\lambda_0\lambda_2 \\
0 & 2\lambda_2\lambda_3 & 0 \\
2\lambda_0\lambda_3 & 0 & 1-2(\lambda_2^2+\lambda_3^2)
\end{pmatrix}, \\
T^2&=&\begin{pmatrix}
2\lambda_1\lambda_3 & 0 & 2\lambda_0\lambda_1 \\
0 & -2\lambda_1\lambda_3 & 0 \\
-2\lambda_0\lambda_3 & 0 & 1-2(\lambda_0^2+\lambda_2^2)
\end{pmatrix}, \\
T^3&=&\begin{pmatrix}
2\lambda_1\lambda_2 & 0 & 2\lambda_0\lambda_1 \\
0 & -2\lambda_1\lambda_2 & 0 \\
-2\lambda_0\lambda_2 & 0 & 1-2(\lambda_0^2+\lambda_3^2)
\end{pmatrix}.
\end{eqnarray}
Hence, for each distinct $j$, $k$, and $l$ in $\{1,2,3\}$,
the trace norm is given by
\begin{equation}
\|T^j\|_1=2\lambda_k\lambda_l+\sqrt{A_j},
\end{equation}
where
\begin{equation}
A_j=\max\left\{
\big(\lambda_0^2+(\mp\lambda_j\pm\lambda_k+\lambda_l)^2\big)
\big(\lambda_0^2+(\lambda_j+\lambda_k\pm\lambda_l)^2\big)
\right\}.
\end{equation}
Thus we obtain that
\begin{equation}
F(\rho_{kl})=\frac{2\lambda_k\lambda_l+\sqrt{A_j}+3}{6}
\label{eq:Fkl_3qubits}
\end{equation}
for all distinct $j$, $k$, and $l$.

Since we can see that $\cl{C}_{kl}=2\lambda_k\lambda_l$,
each maximal teleportation fidelity $F_\mathrm{CT}^j$ can be obtained as follows:
\begin{equation}
F_\mathrm{CT}^j(\psi_{{W}})=\frac{\cl{C}_{kl}+2}{3}=\frac{2\lambda_k\lambda_l+2}{3}.
\label{eq:F_CTj}
\end{equation}
Thus, since for each distinct $j$, $k$, and $l$ in $\{1,2,3\}$
the control power becomes
\begin{equation}
P^j(\psi_{{W}})=\frac{1}{6}\left(2\lambda_k\lambda_l+1-\sqrt{A_j}\right),
\end{equation}
the minimal control power for a $W$-class state is
\begin{equation}
P(\psi_{{W}})
=\min\left\{\frac{1}{6}\left(2\lambda_k\lambda_l+1-\sqrt{A_j}\right)\right\},
\end{equation}
where the minimum is taken over all distinct $j$, $k$, and $l$ in $\{1,2,3\}$.

From the fact that $\lambda_j^2(\lambda_k-\lambda_l)^2\ge0$,
we can show the inequality $\sqrt{A_j}\le1-2\lambda_k\lambda_l$.
This implies that
\begin{equation}
F(\rho_{kl})=\frac{\|T^j\|_1+3}{6}\le \frac{2}{3},
\label{eq:F_W2}
\end{equation} for all distinct $j$, $k$, and $l$.
Hence, from Eq.~(\ref{eq:F_CTj}) and the inequality~(\ref{eq:F_W2}),
we can say that the minimal control power is meaningful for any $W$-class state
if and only if all $\lambda_j$'s are non-zero.

Let ${W}$ be the standard W state, that is,
${W}=\psi_{{W}}$ with
$\lambda_0=0$ and $\lambda_1=\lambda_2=\lambda_3={1}/{\sqrt{3}}$.
Then we can obtain the following proposition.
\begin{Prop}\label{Prop}
For any $W$-class state $\psi_{{W}}$,
\begin{equation}
P(\psi_{{W}})\le\frac{2}{9}=P({W}).
\end{equation}
\end{Prop}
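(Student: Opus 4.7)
The target is to show $P(\psi_W)=\min_j P^j(\psi_W)\le 2/9=P(W)$. Since the left-hand side is a minimum, it suffices to exhibit, for each $\psi_W$, one index $j^{*}$ with $P^{j^{*}}(\psi_W)\le 2/9$. Using the closed form $P^j(\psi_W)=(2\lambda_k\lambda_l+1-\sqrt{A_j})/6$, this is equivalent to producing a lower bound $\sqrt{A_{j^{*}}}\ge 2\lambda_{k^{*}}\lambda_{l^{*}}-1/3$.

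The key algebraic input is the normalization identity $1-2\lambda_k\lambda_l=\lambda_0^2+\lambda_j^2+(\lambda_k-\lambda_l)^2$, which is immediate from $\sum_i\lambda_i^2=1$. Combined with a concrete sign choice inside the $\max$ defining $A_j$, this yields a tractable polynomial expression. Writing $\alpha=\lambda_0^2+\lambda_j^2+(\lambda_k-\lambda_l)^2$ and $\beta=2\lambda_j(\lambda_k-\lambda_l)$, the ``lower-sign'' branch evaluates to $A_j^{\mathrm{lower}}=\alpha^2-\beta^2=(1-2\lambda_k\lambda_l)^2-4\lambda_j^2(\lambda_k-\lambda_l)^2$; an analogous expansion handles the ``upper-sign'' branch, and comparison shows which of the two dominates is governed by the sign of $1-2\lambda_j^2$. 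With this in hand, and after treating the sign of $2\lambda_k\lambda_l-1/3$, squaring reduces $P^{j^{*}}(\psi_W)\le 2/9$ to a polynomial inequality in $(\lambda_0,\lambda_1,\lambda_2,\lambda_3)$ on the unit 3-sphere.

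To finish, I would use constrained optimization. One route is to pick $j^{*}$ by the ordering of $\lambda_1,\lambda_2,\lambda_3$ and verify the resulting scalar inequality by Lagrange multipliers. An alternative is to use the averaging bound $\min_j(\cdot)\le(1/3)\sum_j(\cdot)$, which converts the problem into a fully symmetric inequality in the elementary symmetric functions of $\lambda_1,\lambda_2,\lambda_3$ (parametrized also by $\lambda_0$), after which a short Lagrange argument on the 3-sphere pins down the extremum. In either case, equality is attained at $\lambda_0=0$ and $\lambda_1=\lambda_2=\lambda_3=1/\sqrt{3}$, i.e.\ the standard W state.

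The hardest part will be that the ``right'' branch of $A_j$ depends on whether $2\lambda_j^2$ exceeds $1$, so a single uniform choice of $j^{*}$ is unlikely to cover all $\psi_W$ in a single stroke. The averaging strategy sidesteps this, but yields a more elaborate symmetric-polynomial inequality whose verification requires either careful boundary analysis (checking the faces $\lambda_i=0$) or an SOS-type decomposition modulo the constraint $\sum_i\lambda_i^2=1$.
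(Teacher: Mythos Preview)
Your algebraic setup is sound, but the proposal is not a proof: you outline two possible endgames (choose $j^*$ by the ordering of the $\lambda_i$ and run Lagrange multipliers, or average over $j$ and seek an SOS decomposition) and carry out neither, explicitly flagging the branch analysis as ``the hardest part.'' That hard part is exactly where the argument is missing.

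The paper bypasses all of this with two elementary moves you did not identify. First, since $A_j\ge A_j'$ where $A_j'$ is obtained by deleting the $\lambda_0^2$ summand from each factor, the (assumed-for-contradiction) inequality $P^j>2/9$ persists with $A_j'$ in place of $A_j$; this eliminates $\lambda_0$ from the radicand and with it the need for any branch analysis or optimization in $\lambda_0$. Second, taking $j$ with $\lambda_j=\max\{\lambda_1,\lambda_2,\lambda_3\}$ fixes the absolute-value signs: the lower-sign branch gives $\sqrt{A_j'}\ge(\lambda_j+\lambda_k-\lambda_l)(\lambda_j-\lambda_k+\lambda_l)=\lambda_j^2-(\lambda_k-\lambda_l)^2\ge0$. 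Substituting and using $\lambda_k^2+\lambda_l^2=1-\lambda_0^2-\lambda_j^2$ reduces the supposed inequality to $2-2\lambda_j^2-\lambda_0^2>4/3$, i.e.\ $\lambda_j^2<\tfrac13-\tfrac12\lambda_0^2$; but maximality gives $3\lambda_j^2\ge 1-\lambda_0^2$, so $\lambda_j^2\ge\tfrac13-\tfrac13\lambda_0^2$, forcing $\lambda_0^2<0$. No Lagrange multipliers, no SOS, no case split on $2\lambda_j^2\gtrless1$ are needed.
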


\begin{proof}
Suppose that there exists a $W$-class state $\psi_{{W}}$
such that $P(\psi_{{W}})>\frac{2}{9}$, that is,
\begin{equation}
\frac{1}{6}\left(2\lambda_k\lambda_l+1-\sqrt{A_j}\right)>\frac{2}{9}
\label{eq:A_j}
\end{equation}
for all distinct $j$, $k$, $l$ in $\{1,2,3\}$.
For each $j\in\{1,2,3\}$,
let $A_j'$ be defined as
\begin{equation}
A_j':=\max\{
(\mp\lambda_j\pm\lambda_k+\lambda_l)^2
(\lambda_j+\lambda_k\pm\lambda_l)^2
\},
\label{eq:A_j'}
\end{equation}
where $j$, $k$, and $l$ are all distinct in $\{1,2,3\}$,
then it is clear that $A_j'\le A_j$ for all $j\in\{1,2,3\}$.
It follows from the inequality in (\ref{eq:A_j}) that,
for all distinct $j$, $k$, and $l$ in $\{1,2,3\}$,
\begin{equation}
\frac{1}{6}\left(2\lambda_k\lambda_l+1-\sqrt{A'_j}\right)>\frac{2}{9}.
\label{eq:A_r'}
\end{equation}
Hence, we have
\begin{equation}
\frac{1}{6}\left(2\lambda_k\lambda_l+1
-|\mp\lambda_j\pm\lambda_k+\lambda_l||\lambda_j+\lambda_k\pm\lambda_l|\right)
>\frac{2}{9}
\label{maxl}
\end{equation}
for all distinct $j$, $k$, and $l$ in $\{1,2,3\}$.

Without loss of generality,
we may now assume that $\lambda_j=\max\{\lambda_1,\lambda_2,\lambda_3\}$.
Then the inequality~(\ref{maxl}) becomes
\begin{equation}
\frac{1}{2}\left(2\lambda_k\lambda_l+1
-(\lambda_j+\lambda_k-\lambda_l)(\lambda_j-\lambda_k+\lambda_l)\right)
>\frac{2}{3},
\end{equation}
which is equivalent to the inequality
\begin{equation}
1-\lambda_j^2+\lambda_k^2+\lambda_l^2>\frac{4}{3}.
\label{eq:43}
\end{equation}
Since $1-\lambda_j^2\ge \lambda_k^2+\lambda_l^2$,
we can obtain from the inequality~(\ref{eq:43}) that
$\lambda_j^2<{1}/{3}$,
which is a contradiction due to the assumption that $\lambda_j$ is maximal.
\end{proof}

\section{Minimal Control Power in an $n$-qubit CT} \label{NQCT}
\subsection{$n$-qubit GHZ states}\label{CT/NGHZ}
Let $|\psi_{\tn{GHZ}}^{(n)}\rangle$ be an $n$-qubit GHZ state defined by
\begin{equation}
|\psi_{\tn{GHZ}}^{(n)}\rangle=a\ket{00\cdots0}+b\ket{11\cdots1},
\end{equation}
where $a,b\in\bb{C}$ such that $|a|^2+|b|^2=1$.
Then, as seen in the three-qubit case in Sec.~\ref{CT/GHZ},
it is clear that $F(\rho_{kl})={2}/{3}$ for any distinct $k$ and $l$ in $\{1,2, \ldots, n\}$, 
and it is also clear that $F_{\tn{CT}}^J=2(|a||b|+1)/3$ for any $(n-2)$-element subset $J$ of $\{1,2, \ldots, n\}$. 
Thus we have the (minimal) control power for $n$-qubit GHZ states
\begin{equation}
P^J(\psi_{\tn{GHZ}}^{(n)})=P(\psi_{\tn{GHZ}}^{(n)})=\frac{2|a||b|}{3}
\end{equation}
for all $(n-2)$-element subsets $J$ of $\{1,2, \ldots, n\}$,
and it is totally equivalent to the case of general three-qubit GHZ states in Sec.~\ref{MCP3}.

\subsection{$n$-qubit $W$-type states}\label{CT/NW}
Let $\psi_{{W}}^{(n)}=|\psi_{{W}}^{(n)}\rangle\langle\psi_{{W}}^{(n)}|$ be
an $n$-qubit $W$-type state, defined by
\begin{eqnarray}
|\psi_{{W}}^{(n)}\rangle
&=&\alpha_1\ket{100\cdots0}+\alpha_2\ket{010\cdots0} \nonumber\\
&&+\alpha_3\ket{0010\cdots0}+\cdots+\alpha_n\ket{00\cdots01},
\end{eqnarray}
where $\alpha_i\in\bb{C}$ such that $\sum_{i=1}^n|\alpha_i|^2=1$.
Then, for $J=\{1,2,\ldots,n\}\setminus\{k,l\}$ ($k>l$),
we have a reduced density matrix as
\begin{eqnarray}
\rho_{kl}^{{W}}
&=&\T_J|\psi_{{W}}^{(n)}\rangle\langle\psi_{{W}}^{(n)}| \nonumber\\
&=&\big(\alpha_k\ket{10}+\alpha_l\ket{01}\big)\big(\alpha_k^*\bra{10}
+\alpha_l^*\bra{01}\big) \nonumber\\
&&+\sum_{j\in J}|\alpha_j|^2
\proj{00}{00},
\end{eqnarray}
where it is known that $\cl{C}(\rho_{kl}^{{W}})=2|\alpha_k||\alpha_l|$
in Ref.~\cite{KS08}. If we define $\alpha=\sqrt{\sum_{j\in J}|\alpha_j|^2}$, then,
by using straightforward calculation,
\begin{equation}
\|T^J\|_1=4|\alpha_k||\alpha_l|+\big||\alpha|^2-|\alpha_k|^2-|\alpha_l|^2\big|.
\end{equation}
Thus, we have
\begin{equation}
F(\rho_{kl}^{{W}})
=\frac{3+4|\alpha_k||\alpha_l|+\big||\alpha|^2-|\alpha_k|^2-|\alpha_l|^2\big|}{6}
\le\frac{2}{3},
\end{equation}
since $|\alpha|^2=1-|\alpha_k|^2-|\alpha_l|^2$. Also we can obtain that
\begin{equation}
F_{\tn{CT}}^J(\psi_{{W}}^{(n)})
=\frac{2|\alpha_k||\alpha_l|+2}{3}>\frac{2}{3}.
\end{equation}
For this reason, we can say that the minimal control power is meaningful
for $n$-qubit $W$-type states if $\alpha_i\neq 0$ for all $i$.

Now we formulate the (minimal) control power for an $n$-qubit $W$-type state.
For any $J\subseteq\{1,2,\ldots,n\}$, the control power is given by
\begin{eqnarray}
P^J(\psi_{\tn{W}}^{(n)})
&=&\frac{1+\big|1-2\big(|\alpha_k|^2+|\alpha_l|^2\big)\big|}{6} \\
&=&\begin{cases}
\frac{|\alpha|^2}{3}&\text{if} \;\;|\alpha_k|^2+|\alpha_l|^2\le\frac{1}{2}, \\
\frac{|\alpha_k|^2+|\alpha_l|^2}{3}&\text{if}\;\;|\alpha_k|^2+|\alpha_l|^2>\frac{1}{2}.
\end{cases}
\end{eqnarray}
Thus, we have the minimal control power of those type:
\begin{equation}
P(\psi_{\tn{W}}^{(n)})=\frac{1}{6}\min\left\{1+\big|1-2\big(|\alpha_k|^2
+|\alpha_l|^2\big)\big|\right\}.
\end{equation}
Note that, for the $n$-qubit standard $W$ class
$|{W}^{(n)}\rangle=\frac{1}{\sqrt{n}}\ket{100\cdots0}+\cdots+\ket{00\cdots01}$,
the minimal control power is given by
\begin{equation}
P(\tn{W}^{(n)})=\begin{cases}
\frac{1}{3}-\frac{2}{3n}&\text{if} \;\;n\ge4 \\
\frac{2}{9}&\text{if}\;\;n=3.
\end{cases}
\end{equation}
By employing Proposition~\ref{Prop}, 
it can be shown that
\begin{equation}
P(\psi_W^{(n)}) \le \frac{2}{9} = P(W_3^{(n)}),
\label{eq:max_Wn}
\end{equation}
where $W_3^{(n)}$ is an $n$-qubit W-type state 
with $\alpha_{j_1}=\alpha_{j_2}=\alpha_{j_3}=1/\sqrt{3}$
and $\alpha_j = 0$ for all $j$ in $\{1,2, \ldots, n\}\setminus\{j_1,j_2,j_3\}$. 

\section{Conclusion}\label{conclusion}
We have considered the CT of a multiqubit state,
and have presented a new concept called the minimal control power
representing how faithfully the CT can be performed.
In addition, we have calculated the values of the minimal control power
for a class of general three-qubit GHZ states
and the three-qubit $W$ class whose states have zero three-tangles.
Extending the three-qubit cases, we also introduced a minimal control power of $n$-qubit
GHZ and $W$-type states.

Moreover, we have shown that the standard GHZ state and the standard $W$ state have
the maximal values of the minimal control power for the two classes.
We can also obtain a similar result for $n$-qubit GHZ and $W$-type states.
Therefore, this implies that
our new quantity, the minimal control power,
has not only an operational meaning in the CT
but also is an appropriate property
as a degree of tripartite entanglement.

For $W$-type states, their three-tangle values are zero,  
even though these states have three-party correlations. 
These correlations are apparently captured 
by the minimal control power discussed in our paper. 
In other words, the minimal control power could be 
a candidate for a measure of three-party correlation
complementary to the three-tangle.

\acknowledgements{
We are grateful to an anonymous referee for valuable comments.
This work was partly supported by the IT R\&D program of MOTIE/KEIT [10043464].
K.J. acknowledges financial support by the National Research Foundation of Korea (NRF) Grant funded by the Korea government (MSIP) (Grant No. 2010-0018295). 
S.L. acknowledges financial support
by the Basic Science Research Program through the National Research Foundation of Korea
funded by the Ministry of Education (NRF-2012R1A1A2003441)
and the Associate Member Program funded by the Korea Institute for Advanced Study.
}

\end{document}